\documentclass[journal]{IEEEtran}

\usepackage{multicol,theorem,amsmath,array,cmmib57}
\usepackage{amssymb}
\usepackage{latexsym}
\usepackage{amsfonts}
\usepackage{bbm}

\newtheorem{thm}{Theorem}
\newtheorem{prop}[thm]{Proposition}
\newtheorem{deff}{Definition}

\newtheorem{cor}[thm]{Corollary}

\newcommand{\F}{\mathbb{F}}

\newcommand{\wt}{\mbox{\rm wt}}

\newcommand{\dd}{d^{\perp} }

\title{Results on Binary Linear Codes With Minimum Distance $8$ and $10$}
\author{Iliya Bouyukliev  and Erik Jacobsson
  \thanks{Iliya Bouyukliev is with the Institute of Mathematics and Informatics, Bulgarian Academy of Sciences, P.O.Box 323, 5000 Veliko Tarnovo,
Bulgaria, iliya@moi.math.bas.bg}
\thanks{Erik Jacobsson is with the
Department of Mathematical Sciences, University of Gothenburg and Chalmers University of Technology, S-412 96 Gothenburg, Sweden,
erik.jakobsson@chalmers.se}
}
\date{\today}

\begin{document}
\maketitle

\begin{abstract}
All codes with minimum distance $8$ and codimension up to $14$ and
all codes with minimum distance $10$ and codimension up to $18$
are classified. Nonexistence of codes with parameters [33,18,8]
and [33,14,10] is proved. This leads to 8 new exact bounds for
binary linear codes. Primarily two algorithms considering the dual
codes are used, namely extension of dual codes with a proper
coordinate, and a fast algorithm for finding a maximum clique in a
graph, which is modified to find a maximum set of vectors with the
right dependency structure.
\end{abstract}


\section{Introduction}

Let $\F_2^n$ denote the $n$-dimensional vector space over the
field $\F_2$ and let the inner product
 $\left< \cdot ,\cdot\right>: \F_2^n \times \F_2^n \rightarrow \F_2$ be defined in the natural way as
 $\left<u,v\right> = \sum_{i=1}^n u_i v_i$, where addition is in $\F_2$. The Hamming
distance between two vectors of $\F_2^n$ is defined as the number
of coordinates in which they differ, and the weight $\wt(v)$ of a
vector $v \in \F_2^n$ is the number of the nonzero coordinates of
$v$. A
linear binary $[n,k,d]$ code $C$ is a $k$-dimensional subspace of
$\F_2^n$ with minimum distance $d = \min\{\wt(c) :
 c \in C, c\neq 0 \}$. A generator matrix $G$ for an
$[n,k]=[n,k,d \geq 1]$ code is any matrix whose rows form a basis
for the code. The orthogonal complement $C^\perp$ of $C$ in
$\F_2^n$ is called the dual code of $C$ and is an
$[n,n-k,d^\perp]$ code, where $d^\perp$ is called the dual
distance of $C$ and $n-k$ the codimension. A generator matrix $H$
for the dual code is called a parity check matrix of the code $C$.
In this paper we will say that a code $C$ is an
$[n,k,d]^{d^\perp}$ code if it is an $[n,k,d]$ code with dual
distance $d^\perp$. Further, two binary linear codes $C_1$ and
$C_2$ are said to be equivalent if there is a permutation of
coordinates which sends $C_1$ to $C_2$.
 Throughout this paper all codes are assumed to be binary.

A central problem in coding theory is that of optimizing one of
the parameters $n$, $k$ and $d$ for given values of the other two.
Usually this optimization is related to the following functions:
$n_2(k,d)$ - the minimum length of linear codes for given minimum
distance $d$ and dimension $k$ and $d_2(n,k)$  the largest value
of $d$ for which a binary [$n,k,d$] code exists.  Codes with
parameters $[n_2(k,d),k,d]$ and $[n,k,d_2(n,k)]$ are called
\textit{optimal}. There are many reasons to study optimal codes.
These codes are interesting not only for detection and correction
of errors. Some of them have rich algebraic and combinatorial
structure. The problems of optimality are strongly connected and
can be considered as packing problem in statistics and in finite
projective spaces \cite{HirSt}. Unfortunately, all these problems
are, as many others in coding theory, computationally difficult
\cite{Barg} and the exact values of $n_2(k,d)$ for all $d$ are
known only for $k\le 8$ \cite{BJV_IEEE00}. Tables with bounds and
exact values for $d_2(n,k)$  are given in \cite{Brou98} and
\cite{new-Tables}.

Another application of optimal codes is directly related to the
design method of cryptographic Boolean functions suggested by
Kurosawa and Satoh \cite{Kurosawa}. In this case the optimal
linear codes have to be with largest possible dual distance. More
precisely one have to study  the function $N(d,d^\perp)$ as the
minimal $n$ such that there exists a linear binary code of length
$n$ with minimum distance $d$ and dual distance $d^\perp$. The
investigation of $N(d,d^\perp)$ seems to be much harder than the
investigation of $n_2(k,d)$. There are some general bounds (see
\cite{Matsumoto}) but these bounds can be reached only for a few
values of $d$ and $d^\perp$. In a previous work, we studied
$N(d,d^\perp)$ for $d^\perp\leq d \le 12$ by computer using the
package \textsc{Q-Extension} \cite{Bo_SJC07}. With this package we
attempted to construct generator matrices and classify codes with
fixed parameters. Practically, we had no success in the cases
where the dual distance was more than 6. In the case of codes with
fixed minimum distance larger than 2, it is quite natural to look
at the duals of the codes with needed  properties and to the
parity check matrices. In other words, extending an $[n-1,k-1,d]$
code $C_1$ to an $[n,k,d]$ code $C_2$ can be considered as
extending an $[n-1,n-k]^{d}$ to an $[n,n-k]^{d}$ code with one
coordinate. This approach helps us to develop two different
methods, which are much more effective when we study
$N(d,d^\perp)$, and also $n_2(k,d)$, for $d= 8$ and $d=10$. For
small dimensions, it is convenient to use a brute force algorithm
that takes generator matrices of all inequivalent $[n,k,d]^{\dd}$
codes as input, extends them in all possible ways and then checks
the constructed codes for minimum and dual distance and for
equivalence. As a result we get all inequivalent $[n+1,k,\ge
d]^{\dd}$ codes. This method is of course impossible to use for
larger dimensions $k$, because of the large number of possible
extensions. To avoid this problem we use a second method for
larger dimensions adopting a strategy for bounding of the search
space similar to the strategy for finding a maximum clique in a
graph suggested in \cite{Ostergard-clique}.

 In this paper we present two algorithms which can be used for
 constructing of linear codes with fixed dual distance. We give
 classification results for  all codes with minimum
 distance 8 and codimension up to 14 and minimum
 distance 10 and codimension up to 18.
We would like to refer to \cite{Jaff97} and \cite{Jaffe} for a detailed bibliography of works which study linear codes with minimum
distance 8 and 10.

\section{Preliminaries}

In this Section we give some properties of linear codes and the
relations with their dual codes which help us to design the
construction algorithms.

The first proposition considers the even linear codes, namely the
linear binary codes which consist only of even weight vectors.

\begin{prop}
If $d\geq 2$ is even and a linear $[n,k,d]$ code exists, then
there exists an even $[n,k,d]$ code.
\end{prop}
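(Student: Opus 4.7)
The plan is to start from an $[n,k,d]$ code $C$ and, if $C$ is not already even, to produce an even code with the same parameters by twisting the odd coset of its even subcode by a single unit vector.

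First I would form the even subcode $C_e := \{c \in C : \wt(c) \text{ is even}\}$.  As the kernel of the linear parity functional $C \to \F_2$, $c \mapsto \wt(c) \bmod 2$, it is either all of $C$ (in which case there is nothing to prove) or a subspace of codimension $1$.  In the latter case I would pick any $c_0 \in C \setminus C_e$, which has odd weight, and note that because $d$ is even, every minimum-weight codeword of $C$ lies in $C_e$; hence $d(C_e) = d$ and $\wt(c_0) \geq d+1$.

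Next I would pick a coordinate $j$ with $e_j \notin C$; such a $j$ exists because $d \geq 2$ forbids weight-one codewords. Then I would set
\[
C' \;:=\; C_e \;\oplus\; \langle c_0 + e_j \rangle,
\]
and verify three claims: (i) $\dim C' = k$, because $c_0+e_j \notin C_e$ (otherwise $e_j \in C$, contradicting the choice of $j$); (ii) every codeword of $C'$ has even weight, since $\wt(c_0+e_j)$ is odd$+$odd$=$even and adding any element of $C_e$ preserves parity; (iii) the minimum distance of $C'$ equals $d$.  For (iii), codewords lying in $C_e$ already have weight $\geq d$, while the remaining codewords have weight $\wt((c_0+c_e)+e_j) = \wt(c_0+c_e) \pm 1$, and since $c_0+c_e$ belongs to the odd coset $C \setminus C_e$ its weight is odd and $\geq d+1$, so flipping coordinate $j$ keeps the weight $\geq d$.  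Equality $d(C') = d$ then follows because $C_e$ already contains a weight-$d$ codeword.

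The main obstacle is point (iii): this is exactly where the hypothesis that $d$ is even carries the argument.  Evenness of $d$ forces every odd-weight codeword of $C$ to have weight at least $d+1$, and this one-unit of slack is precisely what is needed to absorb the $\pm 1$ arising from the coordinate flip $+\,e_j$.  Without this parity gap, the twist could drop the minimum distance below $d$ and the construction would fail.
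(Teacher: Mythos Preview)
Your proof is correct, but the paper takes a different (and shorter) route: it punctures $C$ at an arbitrary coordinate to obtain an $[n-1,k,d\text{ or }d-1]$ code (the dimension survives because $d\ge 2$), and then appends an overall parity-check bit, which yields an even $[n,k,d]$ code.

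Interestingly, the two constructions produce the \emph{same} code. If one punctures at position $j$ and then re-extends with a parity bit in that position, every even-weight word of $C$ is left unchanged while every odd-weight word $c$ is sent to $c+e_j$; the result is precisely your $C_e \oplus \langle c_0+e_j\rangle$. The paper's phrasing buys brevity by invoking two standard operations without further analysis; your phrasing buys transparency, since it makes explicit where the hypothesis that $d$ is even is used (the $+1$ slack on the odd coset absorbing the $\pm 1$ from flipping coordinate $j$).
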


\begin{proof}
If $C$ is a linear $[n,k,d]$ code, the code produced by puncturing $C$ in one coordinate has parameters $[n-1,k,d$ or $d-1]$. Adding a
parity check bit to all codewords, we obtain an even $[n,k,d]$
code.
\end{proof}

Practically, we use the following corollary:

\begin{cor}
If $d\geq 2$ is even and even linear $[n,k,d]$ codes do not exist,
then no $[n,k,d]$ code exists.
\end{cor}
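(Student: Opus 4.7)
The plan is essentially to observe that the corollary is just the contrapositive of the preceding proposition, so no new work is required beyond restating it correctly. The proposition asserts, for even $d \geq 2$, the implication
\[
\exists\,[n,k,d] \text{ code} \;\Longrightarrow\; \exists\,\text{even }[n,k,d] \text{ code}.
\]
I would simply take the contrapositive: the failure of the conclusion gives the failure of the hypothesis, which is exactly what the corollary states.

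Concretely, my write-up would proceed in one short paragraph. First, I would suppose for contradiction that $d\geq 2$ is even, that no even $[n,k,d]$ code exists, and yet some $[n,k,d]$ code $C$ does exist. Then I would invoke the preceding proposition directly on $C$ to produce an even $[n,k,d]$ code, contradicting the assumption that none exist. This completes the argument.

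There is no real obstacle here, since the construction (puncture one coordinate, then append an overall parity check) has already been carried out in the proof of the proposition. The only thing worth being careful about is not to reprove the proposition — the corollary is purely a logical reformulation, and its practical usefulness (the reason it is stated as a separate corollary) is that in a computer search one can safely restrict attention to even codes whenever $d$ is even, which is the algorithmic pay-off that motivates recording this contrapositive explicitly.
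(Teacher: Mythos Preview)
Your proposal is correct: the corollary is precisely the contrapositive of the preceding proposition, and the paper treats it as such, stating it without a separate proof. Your one-line contradiction argument is exactly the intended justification.
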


Later on we give the definition and some properties of residual
codes.

\begin{deff}\label{res_def}
The residual code $Res(C,c)$ with respect to a codeword $c \in C$
is the restriction of $C$ to the zero coordinates of $c$.
\end{deff}

A lower bound on the minimum distance of the residual code is
given by
\begin{thm}(\cite{Hand},Lemma 3.9) \label{res_d}
Suppose $C$ is a binary $[n,k,d]$ code and suppose $c \in C$ has
weight $w$, where $d > w/2$. Then $Res(C,c)$ is an $[n-w,k-1,d']$
code with $d' \ge d - w + \lceil{w/2}\rceil$,
\end{thm}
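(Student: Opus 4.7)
The plan is to unpack what residuation does at the linear-algebra level and then extract both the dimension and the minimum-distance bound by comparing a preimage codeword $c'$ with its translate $c+c'$.

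First I would pin down the dimension. Restriction to the $n-w$ zero coordinates of $c$ is a linear map $\pi: C \to \F_2^{n-w}$, so its image $Res(C,c)$ has dimension $k - \dim \ker \pi$. The kernel consists of codewords supported within $\mathrm{supp}(c)$. If some $c' \in \ker \pi$ satisfies $c' \neq 0$ and $c' \neq c$, then both $c'$ and $c+c'$ are nonzero codewords supported on $\mathrm{supp}(c)$, so each has weight $\geq d$, yet their weights sum to $w$. This forces $w \geq 2d$, contradicting $d > w/2$. Hence $\ker \pi = \{0, c\}$ and $\dim Res(C,c) = k-1$. The length $n-w$ is immediate from the construction.

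Next, for the distance bound, I would fix a nonzero $\bar c \in Res(C,c)$ and choose any preimage $c' \in C$ with $\pi(c') = \bar c$. Let $a = |\mathrm{supp}(c') \cap \mathrm{supp}(c)|$; then
\[
\wt(c') = a + \wt(\bar c), \qquad \wt(c+c') = (w - a) + \wt(\bar c),
\]
since $c$ vanishes outside $\mathrm{supp}(c)$ and flips exactly the overlapping positions inside. Both $c'$ and $c+c'$ are nonzero (their restriction to the zero coordinates of $c$ equals the nonzero vector $\bar c$), so each weight is $\geq d$. The key trick is that since $a + (w - a) = w$, we have $\min(a, w-a) \leq \lfloor w/2 \rfloor$, so applying the bound $d \leq \wt$ to whichever of $c', c+c'$ has the smaller overlap yields
\[
d \leq \lfloor w/2 \rfloor + \wt(\bar c), \qquad \text{i.e.,} \qquad \wt(\bar c) \geq d - \lfloor w/2 \rfloor.
\]
Rewriting $-\lfloor w/2 \rfloor = \lceil w/2 \rceil - w$ gives exactly the claimed bound $d' \geq d - w + \lceil w/2 \rceil$.

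There is no real obstacle here; the only subtle point is making sure the preimage argument is valid (i.e., that $c'$ and $c+c'$ are both genuinely nonzero, which is where the hypothesis $\bar c \neq 0$ enters), and handling the floor/ceiling bookkeeping cleanly. The hypothesis $d > w/2$ is used in exactly one place, namely to guarantee that the restriction map has a one-dimensional kernel so that $Res(C,c)$ has the advertised dimension $k-1$.
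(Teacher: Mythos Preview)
Your proof is correct and is essentially the standard argument for this result. Note, however, that the paper does not supply its own proof of this theorem: it is quoted verbatim from \cite{Hand}, Lemma~3.9, and invoked as a known fact. The argument you give---computing $\ker\pi=\{0,c\}$ via the weight contradiction, then bounding $\wt(\bar c)$ by choosing whichever of $c'$, $c+c'$ has the smaller overlap with $\mathrm{supp}(c)$---is exactly the classical proof one finds in that reference, so there is nothing further to compare.
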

and on the dual distance by
\begin{prop}\label{res_dd} Suppose $C$ is a
binary $[n,k,d]$ code with dual distance $d^{\perp}$, $c \in C$,
and the dimension of $Res(C,c)$ is $k-1$. Then the dual distance
of $Res(C,c)$ is at least $d^{\perp}$.
\end{prop}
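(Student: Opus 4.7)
The plan is to pull back any codeword of $Res(C,c)^\perp$ to a codeword of $C^\perp$ of the same weight by padding with zeros on the support of $c$, and then invoke the definition of $d^\perp$.

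More concretely, let $S = \{i : c_i = 0\}$ be the zero-coordinate set of $c$, so $|S| = n - w$ where $w = \wt(c)$, and $Res(C,c) \subseteq \F_2^{S}$ is the image of $C$ under the projection $\pi: \F_2^n \to \F_2^S$. Let $d^*$ denote the dual distance of $Res(C,c)$, and pick a nonzero $v \in Res(C,c)^\perp$ of weight $d^*$ (such a $v$ exists since, by the hypothesis that $\dim Res(C,c) = k-1 < n-w$ for any nontrivial situation where a dual exists, the dual is nontrivial; the hypothesis on the dimension ensures the residual code is a proper subspace of $\F_2^S$ in the generic case, but either way the proof proceeds as follows).

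Define $\tilde v \in \F_2^n$ by $\tilde v_i = v_i$ for $i \in S$ and $\tilde v_i = 0$ for $i \notin S$. Then $\wt(\tilde v) = \wt(v) = d^*$. For any $u \in C$,
\[
\langle \tilde v, u \rangle = \sum_{i \in S} v_i u_i + \sum_{i \notin S} 0 \cdot u_i = \langle v, \pi(u) \rangle = 0,
\]
because $\pi(u) \in Res(C,c)$ and $v$ annihilates $Res(C,c)$. Hence $\tilde v \in C^\perp$, and since $v \neq 0$ we have $\tilde v \neq 0$, so $\wt(\tilde v) \geq d^\perp$. Combining, $d^* = \wt(\tilde v) \geq d^\perp$.

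The only subtle point, and the only place the dimension hypothesis might bite, is ensuring that a nonzero element of $Res(C,c)^\perp$ exists so that the dual distance is well defined; this is why we need $Res(C,c)$ to be a proper subspace of $\F_2^{n-w}$, which is guaranteed whenever $k - 1 < n - w$. Assuming that is in force (otherwise the claim is vacuous), the argument above is immediate and there is no genuine obstacle; the whole proof is essentially the observation that zero-padding a dual codeword of $Res(C,c)$ produces a dual codeword of $C$ of the same weight.
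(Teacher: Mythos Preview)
Your argument is correct: zero-padding a nonzero dual word of $Res(C,c)$ on the support of $c$ yields a nonzero element of $C^\perp$ of the same weight, which immediately gives $d^* \ge d^\perp$. Note, however, that the paper states this proposition without proof, so there is no argument in the paper to compare against; your write-up supplies exactly the standard justification one would expect. Your remark about the dimension hypothesis is also accurate --- it is not used in the inequality itself, only to guarantee that $Res(C,c)^\perp$ is nontrivial so that its minimum distance is defined.
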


There is also a well known elementary relationship between the
minimum distance of a linear code and the parity check matrix.
\begin{prop}\label{pcm}
A linear code has minimum distance $d$ if and only if its parity
check matrix has $d$ linearly dependent columns but no set of
$d-1$ linearly dependent columns.
\end{prop}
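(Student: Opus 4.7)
The plan is to translate the statement about codeword weights directly into a statement about linear dependence of columns of $H$, using the defining equation $Hv^T=0$ for membership in $C$.

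First I would set up notation: let $H$ be a parity check matrix with columns $h_1,\dots,h_n$, and for any $v\in\F_2^n$ let $\mathrm{supp}(v)=\{i:v_i=1\}$. The key observation is that $Hv^T=\sum_{i\in\mathrm{supp}(v)} h_i$, so $v\in C$ if and only if the columns of $H$ indexed by $\mathrm{supp}(v)$ sum to zero. For a nonzero $v$, this is exactly the statement that the $\wt(v)$ columns $\{h_i:i\in\mathrm{supp}(v)\}$ form a linearly dependent set over $\F_2$. Conversely, any nonempty linearly dependent subset of $w$ columns of $H$ (over $\F_2$, all coefficients are $0$ or $1$, and a dependence forces the coefficients of the dependent columns to be $1$) yields a nonzero codeword of weight $w$ via its indicator vector.

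Given this bijection between nonzero codewords of weight $w$ and linearly dependent sets of $w$ columns of $H$, the proof splits into the two implications. For the forward direction, suppose $C$ has minimum distance $d$. Choose a codeword $c$ with $\wt(c)=d$; then its $d$ support columns are linearly dependent. If some set of $d'<d$ columns of $H$ were linearly dependent, the corresponding indicator vector would be a nonzero codeword of weight $d'<d$, contradicting the minimum distance. For the converse, if $H$ has $d$ linearly dependent columns but no smaller dependent set, then the dependent set of size $d$ gives a codeword of weight exactly $d$, while any nonzero codeword of weight $w<d$ would produce a dependent set of $w<d$ columns, which is impossible.

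There is no real obstacle here; the only point requiring a brief word of care is ruling out the trivial case (an ``empty'' dependence or the zero codeword) so that ``$d-1$ linearly dependent columns'' is interpreted as a nonempty set of at most $d-1$ columns whose sum is zero. Once that convention is fixed, both directions follow immediately from the correspondence above.
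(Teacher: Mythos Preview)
The paper states this proposition without proof, calling it a ``well known elementary relationship,'' so there is nothing to compare against; your write-up supplies exactly the standard argument.

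One small imprecision: you claim that for a linearly dependent set of $w$ columns over $\F_2$, ``a dependence forces the coefficients of the dependent columns to be $1$,'' so the indicator vector of the whole set is a codeword of weight $w$. That is only true for a \emph{minimal} dependent set. In general a dependent set of $w$ columns yields a nonzero codeword of weight at most $w$ (supported on the subset that actually sums to zero). This does not hurt your converse direction, since the hypothesis that no $d-1$ columns are dependent forces the given $d$-set to be minimally dependent; and in the forward direction you only need a codeword of weight at most $d'<d$ to reach the contradiction. So the logic survives, but the sentence as written is not literally correct and would be worth tightening.
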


The next proposition gives a connection between weights of the
rows of generator matrices and columns of parity check matrices.
\begin{prop}\label{even}
Any even linear code $C$ has a parity check matrix whose columns
have odd weights.
\end{prop}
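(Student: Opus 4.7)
The plan is to translate the column-weight condition on $H$ into a condition on the sum of the rows of $H$, and then to exploit the fact that an even code has the all-ones vector in its dual.

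First, I would observe the key reformulation: if $H$ is any parity check matrix of $C$ with rows $h_1,\dots,h_{n-k}$, then the weight of the $j$-th column of $H$ taken modulo $2$ equals $\sum_{i=1}^{n-k}(h_i)_j \bmod 2$, i.e., the $j$-th coordinate of $h_1+\cdots+h_{n-k}$. Thus every column of $H$ has odd weight if and only if $h_1+\cdots+h_{n-k}=\mathbf{1}$, the all-ones vector. So the proposition reduces to producing a basis of $C^{\perp}$ whose sum is $\mathbf{1}$.

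Next I would verify that $\mathbf{1}\in C^{\perp}$. Since $C$ is even, for every codeword $c\in C$ we have $\langle \mathbf{1},c\rangle=\sum_i c_i=\wt(c)\equiv 0\pmod 2$, so $\mathbf{1}$ is orthogonal to all of $C$ and therefore lies in $C^{\perp}$.

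Finally I would build the desired basis. Assuming $n\geq 1$ (the $n=0$ case is vacuous), $\mathbf{1}\neq 0$, so I can take $b_1=\mathbf{1}$ and extend it to a basis $b_1,b_2,\dots,b_{n-k}$ of $C^{\perp}$. Now perform the invertible row operation of replacing $b_1$ by $b_1':=b_1+b_2+\cdots+b_{n-k}$; the vectors $b_1',b_2,\dots,b_{n-k}$ still form a basis of $C^{\perp}$, and their sum is
\[
b_1'+b_2+\cdots+b_{n-k}=(b_1+b_2+\cdots+b_{n-k})+(b_2+\cdots+b_{n-k})=b_1=\mathbf{1}.
\]
Taking $H$ to be the parity check matrix whose rows are $b_1',b_2,\dots,b_{n-k}$ completes the proof.

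There is no real obstacle here; the statement is essentially a bookkeeping fact about bases of $C^{\perp}$ once one notes that evenness of $C$ is equivalent to $\mathbf{1}\in C^{\perp}$. The only thing to be careful about is the trivial check that the row operation used to adjust the sum of the basis vectors is indeed invertible, which it is since it is lower triangular with $1$'s on the diagonal when read in the order $(b_1',b_2,\dots,b_{n-k})$.
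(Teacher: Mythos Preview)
Your proof is correct. The reformulation ``all columns odd $\Leftrightarrow$ sum of rows equals $\mathbf{1}$'' is valid, the fact $\mathbf{1}\in C^{\perp}$ follows exactly as you say, and the row operation $b_1\mapsto b_1'=b_1+b_2+\cdots+b_{n-k}$ is invertible over $\F_2$, so the resulting vectors are still a basis with the required row sum.

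The paper takes a different, more concrete route: it puts $G=[I_k\mid P]$ in standard form, observes that each row of $G$ has even weight and hence each row of $P$ has odd weight (the $I_k$ part contributes exactly one $1$), and then reads off directly that $H=[P^{T}\mid I_{n-k}]$ has odd-weight columns (the first $k$ columns are the rows of $P$, the last $n-k$ have weight~$1$). Your argument is more intrinsic: it isolates the single reason the statement is true, namely $\mathbf{1}\in C^{\perp}$, and shows abstractly that any nonzero vector in a subspace can be realized as the sum of a basis. The paper's approach has the virtue of exhibiting an explicit $H$ and simultaneously recovering $\mathbf{1}\in C^{\perp}$ as a consequence (``the sum of all rows of $H$ gives the all-ones vector''), while yours works without ever choosing coordinates or invoking a standard form.
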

\begin{proof}
Let $G = \left[ I_k P \right]$ be a generator matrix for $C$ in
standard form. Then every row of $P$ has odd weight. And
accordingly the parity check matrix $H= \left[ P^T I_{n-k}
\right]$ has only odd weight columns. The sum of all rows of $H$
gives the all-ones vector.
\end{proof}

\begin{prop}\label{weight1}
If a linear code $C$ has $t$ codewords of weight $1$, then its
parity check matrix has $t$ zero-columns.
\end{prop}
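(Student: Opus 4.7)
The plan is to observe that the weight-$1$ codewords of $C$ are exactly the standard basis vectors $e_i \in \F_2^n$ that happen to lie in $C$, and that $e_i \in C$ is equivalent to the $i$-th column of any parity check matrix $H$ being the zero column. The proposition then follows by matching these two sets of indices.

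More concretely, I would first note that a vector in $\F_2^n$ has weight $1$ if and only if it equals $e_i$ for some coordinate $i \in \{1,\dots,n\}$, where $e_i$ denotes the $i$-th standard basis vector. Thus the set of weight-$1$ codewords of $C$ is in bijection with the set $S = \{ i : e_i \in C \}$, and its cardinality is $|S|$. By assumption $|S| = t$.

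Next, using the characterization $C = \{ v \in \F_2^n : Hv^T = 0 \}$, I would observe that $Hv^T$ for $v = e_i$ equals exactly the $i$-th column of $H$. Hence $e_i \in C$ if and only if the $i$-th column of $H$ is the zero vector. So $S$ is precisely the set of indices of zero columns of $H$, and thus $H$ has exactly $t$ zero columns.

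There is no real obstacle here; the argument is a direct unfolding of the definition of the parity check matrix together with the observation that weight-$1$ vectors are standard basis vectors. The only minor thing to be careful about is that the bijection between weight-$1$ codewords and zero columns is exact (not merely an inequality), which follows because distinct coordinates yield distinct weight-$1$ vectors.
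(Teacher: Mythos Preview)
Your proof is correct and follows essentially the same idea as the paper: both identify a weight-$1$ codeword with a standard basis vector $e_i$ and use orthogonality with the dual to conclude that the $i$-th column of $H$ vanishes. The paper phrases this via $\langle e_i, v\rangle = v_i = 0$ for all $v \in C^\perp$, while you use the equivalent characterization $H e_i^T = 0$; you also make the bijection (and hence the exact count $t$) explicit, which the paper leaves implicit.
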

\begin{proof}
Without loss of generality, let $u=(1,0, \hdots ,0) \in C$. Then
$\forall v \in C^\perp$, $0 = \left< u,v \right> = \sum_{i=1}^n
u_i v_i = v_1$.
\end{proof}
This means that if we have found all inequivalent codes with dual
distance $8$ and dimension $14$, then we have also
found all codes with minimum distance $8$ and codimension up to $14$.\\
Another well known fact is that for $d$ odd,
\begin{align*}
|\left\{ \text{inequivalent } [n,k,d]  \text{ codes} \right\}| \\
 \ge |\left\{\text{inequivalent even }[n+1,k,d+1] \text{ codes} \right\}|
\end{align*}
with equality only if the automorphism groups are transitive.

We define for convenience the function $L(k,d^\perp)$.
\begin{deff}
Let $d^\perp \ge 3$. Then $L(k,d^\perp)$ is  the maximum length
$n$  such that a binary $[n,k]^{\dd}$ code exists.
\end{deff}

The next theorem holds.
\begin{thm}\label{b_th}
Let $C$ be a binary linear $[n,k,d]^{\dd}$ code. Then $d \ge n -
L(k-1,d^\perp)$.
\end{thm}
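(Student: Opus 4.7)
The plan is to construct an auxiliary code of dimension $k-1$ whose dual distance is at least $\dd$, and then let the definition of $L$ finish the job. The natural candidate is the residual code with respect to a minimum-weight codeword of $C$, so most of the work has already been packaged into Theorem \ref{res_d} and Proposition \ref{res_dd}.

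Concretely, I would pick some $c \in C$ with $\wt(c) = d$ and pass to $Res(C, c)$, which lives on the $n - d$ zero coordinates of $c$. To apply Theorem \ref{res_d} one needs $d > w/2$ with $w = \wt(c)$; here $w = d$, so the hypothesis reduces to $d > d/2$, which is automatic. The theorem then guarantees that $Res(C, c)$ is an $[n - d, k - 1, d']$ code for some $d'$, and in particular its dimension is exactly $k - 1$. This is the precise hypothesis of Proposition \ref{res_dd}, which I would invoke next to conclude that the dual distance of $Res(C, c)$ is at least $\dd$.

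At that point the argument is essentially done: we have exhibited an $[n - d, k - 1]$ code whose dual distance is at least $\dd$, so by the definition of $L(k - 1, \dd)$ we get $n - d \le L(k - 1, \dd)$, which rearranges to the claimed inequality $d \ge n - L(k - 1, \dd)$. There is no serious obstacle here, since both inputs have already been stated. The only point requiring care is the reading of the definition of $L$: the residual code's dual distance is only guaranteed to be $\ge \dd$ rather than exactly $\dd$, so one is tacitly using the natural convention that $L(k - 1, \dd)$ bounds the length of every code of dimension $k - 1$ whose dual distance meets or exceeds $\dd$. This is consistent with how dual-distance conditions are used throughout the paper.
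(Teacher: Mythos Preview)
Your argument is correct and is essentially the paper's own proof: both take a minimum-weight codeword, pass to the residual code via Theorem~\ref{res_d} and Proposition~\ref{res_dd}, and compare its length $n-d$ to $L(k-1,\dd)$. The only cosmetic difference is that the paper phrases it as a contradiction from $d < n - L(k-1,\dd)$, while you argue directly; your explicit remark about the monotonicity needed when the residual's dual distance exceeds $\dd$ is exactly the point the paper records as $L(k,d_1^\perp)\ge L(k,d_2^\perp)$ for $d_1^\perp\ge d_2^\perp$ (with the inequality oriented appropriately).
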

\begin{proof}
Let $C$ be a code with parameters $[n,k,d]^{\dd}$ and $c_d$ be a
codeword in $C$ of weight $d$. If $d < n - L(k-1,d^\perp)$ then
the residual code $Res(C,c_d)$ has dual distance at least
$d^\perp$, length $n-d > L(k-1,d^\perp)$ and dimension $k-1$. This
is impossible because $L(k,d_1^\perp) \ge L(k,d_2^\perp)$ for
$d_1^\perp \ge d_2^\perp$.
\end{proof}

In fact, the function $L(k,d^\perp)$ has already been investigated
in another setting for dual distance greater than or equal to $4$
in connection with studies on $\kappa$-caps in projective
geometries since it is known that a $\kappa$-cap in $PG(k-1,q)$ is
equivalent to a projective $q$-ary $[n=\kappa,k]^{d^\perp}$ code
with $d^\perp \ge 4$. And $L(k,d^\perp)$ have thus been considered
in connection with the function
\begin{align*}
\mu_\nu(N,q) = & \text{ the maximum value of } \kappa \text{ such that there exist} \\
&\text{ a $\kappa$-cap in } PG(N,q),
\end{align*}
where $q$ is the order of the underlying Galois field, in our case $2$.
 More information on this can be found in the survey by Hirschfeld and Storme \cite{HirSt}.

\section{Computational  Algorithms}

We use mainly two algorithms for the extension of codes.
If the dimension of the considered codes with given dual distance
is small we can find the next column of the generator matrix of
the new code relatively easy because we can represent any
generator matrix $G$ in the packing form as a vector $G_b$ of $n$ computer words.
This algorithm, named \textsc{Bruteforce}, can be described with the following steps:\\

Algorithm \textsc{Bruteforce}:\\

 \noindent
    INPUT: $ C_{inp}$ - Set of all inequivalent $[n,k]^{d^\perp}$ codes represented by their generator matrices in packing form.\\
    OUTPUT: $C_{out}$ - Set of all inequivalent $[n,k+1]^{d^\perp}$
    codes.\\
    var a:array$[1..2^k-1]$ of  integer;\\

    {In the beginning, $C_{out}$ is the empty set.
    For any code $C_r$ in $C_{inp}$ with generator matrix in
    packing form $G_r$ do the following:}
\begin{enumerate}

\begin{item}{    Set a[i]:=1 for any i.}
\end{item}
\begin{item}{
    Find all linear combinations $b$ of up to $d^\perp-2$ column vectors of $G_r$ and set a[b]:=0.}
\end{item}
\begin{item}{
    For all $j$ such that $a[j]=1$ extend $G_r$ with one coordinate, equal to $j$, to $G_r'$. If
    there are no codes in $C_{out}$ equivalent to $C_r'$ (generated by $G_r'$) do $C_{out}:=C_{out}\cup C_r'$}
\end{item}
    \end{enumerate}

\noindent  The big advantage of this algorithm is given by Step 2.
In that step, all possible solutions for the $(n+1)$th column of
the generator matrices are determined with approximately with
$\sum_{i=1}^{d-2}{n \choose i}$ operations. Actually, to find all
vector solutions for the $(n+1)$th column, we take all
$k$-dimensional vectors and delete those which are not
solutions. We find all sums  of less than $\dd-2$ columns of the
known part of the generator matrix. Each sum gives us one vector
which is not a solution and have to be deleted. All remaining
vectors are solutions.

 In Step 3, we use canonical representation
of the objects. The main priority of the canonical representation
is that the equivalence (isomorphism) test is reduced to check of
coincidence of the canonical representations of the structures. In
the case of many inequivalent codes,  the computational time for
comparing is growing fast. A technique for surmounting  this problem
is worked out. We split the set of inequivalent codes into a big
amount of cells according to a proper invariant.

To explain the next algorithm, we need the following definition.

 \begin{deff}\label{p-proper}Let $E$ be a set of $k$-dimensional vectors.
    \begin{enumerate}
    \item We call $E$ \textit{p-proper} if all subsets of  $p$ vectors of $E$ are linearly independent.
    \item Let $M$ be a $k \times n$ matrix. The set $E$ is called $p$-proper with respect to $M$ if $E \cup \{\text{columns in } M \}$ is a $p$-$proper$ set.
    \end{enumerate}
\end{deff}
\noindent Observe that, by Proposition \ref{pcm}, the columns of a parity check matrix for an $[n,k,d]$ code form a $(d-1)$-proper set.\\

 We consider the following problem: How to find a set of  $t=d-1$ binary vectors which have a certain property, i.e.,
     $(\dd-1)$-proper  subset  of the set of all possible binary vectors with respect to a fixed generator matrix.
    To attempt to solve this problem in reasonable time, we adopt an idea suggested by \"Osterg{\aa}rd in
    \cite{Ostergard-clique}
     for finding a maximum clique in a graph in the algorithm \textsc{Extend}.

   Let $C$ be an $[n,k,d]^{\dd}$ code with a generator matrix $G$ in the form
    \begin{equation} G =
        \left[
        \begin{array}{cc|c}
        0 0 \hdots 0 & 1 & 1 1 \hdots 1 \\ 
         & 0 &   \\
        Res_d(C)  &  \vdots &  X  \\
          &  0      &
        \end{array}
        \right]
        =
        \left[
        \begin{array}{c|c}
        A  &  \hat{X}
        \end{array}
        \right]
        \end{equation}
where $Res_d(C)$ is a generator matrix of the residual
$[n-d,k-1,\geq \frac{d}{2}]^{\ge\dd}$ code. Given that we know all
such inequivalent generator matrices the problem is reduced to
finding all $(\dd-1)$-proper sets $\hat{X}$ with respect to $A$ of
$d-1$ binary vectors on the form $(1,x_2,\hdots,x_k)^T$ .

Let
\begin{equation*} V^* = \left\{ (1,x_2,x_3,\hdots,x_k): x_i \in \F_2 \right\}
\end{equation*}

(\textit{Remark}: If the dual code is even we may, by Proposition \ref{even}, reduce the search space to the set of odd-weight binary vectors).
Delete from $V^*$ all linear combinations of $\dd-2$, or less, vectors from $A$. The remaining set
\begin{equation}
V = \left\{ v_1, v_2, \hdots, v_N \right\}
\end{equation}
 is the search space for our search strategy.
Now, for each integer $1 \leq i \leq N$, let

\begin{equation}
V_i = \left\{ v_i, v_{i+1}, \hdots, v_N \right\}
\end{equation}
and let $\textbf{r}$ be the $N$-tuple, defined by $\textbf{r}[i]=
\min\{s,t \}$, $1\leq i \leq N$, where $s$ is the size of the
largest $(\dd-1)$-proper subset of $V_i$. First we consider
$(\dd-1)$-proper subsets with respect to $A$ of $V_N$ that contain
the vector $v_N$, this obviously is $\left\{ v_N \right\}$, and we
record the size of the largest proper subset found up to now in
the tuple ${\bf r}$, so $\textbf{r}[N]=1$.

In the $i$-th step we consider $(\dd-1)$-proper subsets with
respect to $A$ in $V_i$ containing $v_i$ and record the minimum
between  size of the largest proper subset found up to now and $t$
in ${\bf r}[i]$ (row *** in the algorithm).

The tuple ${\bf r}$ for the already calculated steps enables the
pruning strategy for the search. Since we are looking for a proper
subset of size $t = d-1$, and if the vector $v_i$ is to be the
$(size)$st vector
 in the subset and $size+{\bf r}[i] < t$, then we can prune the search (row * in the algorithm).
 When the search terminates, the size $s$ of the largest $(\dd-1)$-proper
 subset  with respect to $A$ of $V$ or $t$ (if $t<s$) will then be recorded in ${\bf r}[1]$.

In Step $size$, $size>1$, we choose all $k$ dimensional vectors
with first coordinate 1 which are not linearly dependent with
$\dd-2$ column vectors of the constructed until now part of the
generator matrix. Our idea is with one pass to find all proper
vectors (all elements of $U_{size}$) using all column vectors from
 the generator matrix obtained until this step using $U_{size-1}$ (all proper
 vectors  from previous step). To find all proper
 vectors $U_{size}$ we take  $U_{size-1}$
 and delete those which are not proper, with respect to the
already constructed part (row ** in the algorithm). We find all
sums  of less than $\dd-2$ columns. Each sum gives us one vector
which is not $(\dd-1)$-proper to the current step and have to be
deleted. All remaining vectors are proper. To improve the speed of
the algorithm, we pack each column in a computer word and use the
bit operation XOR for computer words. The presented algorithm is
much faster than the algorithm in \cite{BV_IEEE05}.

\section{Results}

In this section, we present obtained result for codes with dual
distance 8 and 10 using the above algorithms. With algorithm
\textsc{Bruteforce} we construct  all codes with length $n\leq
28$, dimension $k\le 14$ and dual distance at least 8, and all
codes with dimension up to 18 and dual distance at least 10. The
summarized results for the number of inequivalent codes for given
parameters are presented in the tables below. The stars in some
cells mean that for the corresponding parameters $n$ and $k$ there
are codes with dual distance greater than the considered one. In
the remaining cases,  the number of inequivalent codes of length
$n$ and dimension $k$, given in the table,  coincide with the
number of optimal codes with minimum distance 8 (respectively 10)
which have dimension $n-k$ and length not larger than $n$. We can
use the numbers in the tables to determine the exact number of
inequivalent optimal codes with length $n'=n$ and dimension
$k'=n-k$ in some of the cases. For the cells without $*$, the
number of inequivalent optimal $[n'=n,k'=n-k,8]$ codes is equal to
the number of inequivalent $[n,k]^d$ codes minus the number of
inequivalent $[n-1,k-1]^d$ codes. The calculations took about 72
hours in contemporary PC. The number of all inequivalent codes
with dimension 15 and dual distance 8, and dimension 19 and dual
distance 9, grows exponentially, so we could not calculate all
cases. That is why we consider the problems for existence of codes
with parameters $[33,18,8]$ and $[33,15,10]$.

The existence of a $[33,18,8]$ code leads to the existence of a
$[33,18,8]$ even code and $[32,17,8]$ even code (from the
properties of shortened codes and Lemma \ref{even}) and its dual
code $C_{32}$  with parameters $[32,15,d]^8$. We know that
$L(14,8)=28$ (see Table 1) and $d \le 8$ from the tables for
bounds of linear codes \cite{new-Tables}. Theorem \ref{b_th} gives
us that the minimum distance $d$ of $C_{32}$ has to be $4 \le d
\le 8$. Using the algorithm \textsc{Extend} and already
constructed even codes with parameters $[28,14,8]^8$,
$[27,14,7]^8, \dots, [24,14,d \ge 4]^8$, we obtain that there are
exactly two inequivalent even codes $C_{32}^1$ and $C_{32}^2$ with
generator matrices $G_{32}^1$ and $G_{32}^2$ and weight
enumerators:

\noindent
$1+124z^{8}+1152z^{10}+3584z^{12}+6016z^{14}+11014z^{16}+6016z^{18}+3584z^{20}+1152z^{22}+124z^{24}+z^{32}$, and\\
$1+116z^{8}+1216z^{10}+3360z^{12}+6464z^{14}+10454z^{16}+6464z^{18}+3360z^{20}+1216z^{22}+116z^{24}+z^{32}.$\\

\begin{table}[htb]
 {\small \[
 G_{32}^1=\left( \begin{array}{c}
10000000000000000000000001111111\\
01000000000000001010101101100011\\
00100000000000000011100001010101\\
00010000000000000011010001011010\\
00001000000000011000101011110100\\
00000100000000011011101011111011\\
00000010000000010111111110111111\\
00000001000000010111110010110000\\
00000000100000010111110100001100\\
00000000010000011010010100101110\\
00000000001000001101100010101101\\
00000000000100011110101101010000\\
00000000000010000110011010011101\\
00000000000001001000110110010111\\
00000000000000100110101100111001\\
  \end{array}
\right)
\]}
\end{table}

\begin{table}[htb]
 {\small \[ G_{32}^2=
\left( \begin{array}{c}
10000000000000000000000001111111\\
01000000000000001010101101110100\\
00100000000000000011100001010101\\
00010000000000011011100111001110\\
00001000000000010111101011111000\\
00000100000000001100011101100011\\
00000010000000011000111110100100\\
00000001000000011000110010101011\\
00000000100000011000110100010111\\
00000000010000010101010100101110\\
00000000001000001101100010110110\\
00000000000100010001101101100011\\
00000000000010000111110110110000\\
00000000000001000110011010001001\\
00000000000000100110101100101110\\
  \end{array}
\right)
\]}
\end{table}

None of these two codes can be extended to a code with parameters
$[33,15]^8$. This leads to

\begin{thm}
Codes with parameters $[33,18,8]$ do not exist and $n_2(18,8)=34$.
\end{thm}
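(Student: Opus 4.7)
The plan is to argue in the dual setting and reduce the question to a classification already within reach of the algorithms of Section III. Assume for contradiction that a $[33,18,8]$ code exists. The evenness reduction of Section II lets us take this code to be even, and shortening on any coordinate yields an even $[32,17,8]$ code whose dual $C_{32}$ is a $[32,15,d]^{8}$ code. Since $L(14,8)=28$ is already known from the Section IV tables, Theorem~\ref{b_th} forces $d \ge 32-28=4$, while $d\le 8$ is read off the tables of \cite{new-Tables}, so $4\le d \le 8$.

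The main step is to enumerate all even $[32,15,d]^{8}$ codes for $d\in\{4,5,6,7,8\}$. Given such a code $C_{32}$ and a minimum-weight codeword $c_{d}$, Theorem~\ref{res_d} and Proposition~\ref{res_dd} show that $Res(C_{32},c_{d})$ is a $[32-d,\,14,\,\ge\lceil d/2\rceil]^{\ge 8}$ code; in particular it belongs to one of the already-classified families of $[n,14]^{8}$ codes for $n\in\{24,25,26,27,28\}$, listed in the excerpt as $[28,14,8]^{8}$, $[27,14,7]^{8}$, \ldots, $[24,14,\ge 4]^{8}$. Inverting the residual construction corresponds exactly to the block form of Section III: fix $A$ from a residual generator matrix together with the weight-$d$ first column, and enumerate $(d^{\perp}-1)$-proper sets $\hat X$ of $k$-dimensional column vectors with first coordinate~$1$, using the algorithm \textsc{Extend} and restricting to odd-weight columns by Proposition~\ref{even}. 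After canonical-form equivalence testing across the five input families, one expects exactly the two inequivalent survivors $C_{32}^{1},C_{32}^{2}$ with the displayed weight enumerators.

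To finish, I would feed $\{C_{32}^{1},C_{32}^{2}\}$ into one more run of \textsc{Extend}, now with target parameters $[33,15]^{8}$, and verify that neither admits such an extension. Dualising then contradicts the initial assumption and yields $n_{2}(18,8)\ge 34$; combined with a known $[34,18,8]$ code taken from the standard tables \cite{Brou98, new-Tables}, the equality $n_{2}(18,8)=34$ follows. The main obstacle is computational completeness rather than any conceptual subtlety: one must trust that \textsc{Extend} enumerates exhaustively for every $d\in\{4,\ldots,8\}$, especially at the small-$d$ end where the residual is longest and the pruning tuple ${\bf r}$ is least effective, and that canonical-form hashing during equivalence testing does not accidentally merge or discard inequivalent candidates.
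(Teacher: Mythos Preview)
Your proposal is correct and follows essentially the same route as the paper: reduce to an even $[32,17,8]$ code, dualise to a $[32,15,d]^{8}$ code with $4\le d\le 8$ via $L(14,8)=28$ and Theorem~\ref{b_th}, rebuild all such codes from the classified $[32-d,14]^{8}$ residuals using \textsc{Extend}, obtain exactly the two codes $C_{32}^{1},C_{32}^{2}$, and verify that neither extends to $[33,15]^{8}$. The only addition you make is spelling out the upper-bound side $n_{2}(18,8)\le 34$ via a known $[34,18,8]$ code, which the paper leaves implicit.
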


It follows that codes with parameters $[32,18,7]$ and $[33,19,7]$
do not exist and the minimum distance of a putative
$[33,19,d]^{10}$ code has to have $d<7$. From Theorem \ref{b_th} and
value of $L(18,10)=28$ we have that $d$ has to be 5 or 6.


 There are exactly 30481 codes with parameters $[27,18]^{10}$ and
 11 codes with parameters $[26,18]^{10}$ and the dual codes of all  these codes are even.
But none can be extended to a code with parameters
$[33,19,d]^{10}$.
 This leads us to the conclusion

\begin{thm}
Codes with parameters $[33,14,10]$ do not exist and $n_2(14,10)=34$.
\end{thm}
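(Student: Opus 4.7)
The plan is to argue by duality. Suppose, for contradiction, that a $[33,14,10]$ code $C$ exists; then its dual $D = C^\perp$ is a $[33,19,d]^{10}$ code, and I would aim to rule out every admissible value of $d$. The previous theorem gave $n_2(18,8)=34$, so neither a $[32,18,8]$ nor a $[33,18,8]$ code exists. A hypothetical $[33,19,7]$ code, shortened in one coordinate, would yield a $[32,18,d']$ code with $d' \ge 7$; if $d' \ge 8$ this contradicts $n_2(18,8)=34$ directly, while if $d'=7$ appending an overall parity check gives $[33,18,8]$, again impossible. Hence $d \le 6$. Theorem~\ref{b_th} conversely forces $d \ge 33 - L(18,10) = 33 - 28 = 5$. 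So $d \in \{5,6\}$.

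Next, each case is reduced to a tractable classification via residuals. For a codeword $c \in D$ of small weight $w$ with $d > w/2$, Theorem~\ref{res_d} and Proposition~\ref{res_dd} guarantee that $Res(D,c)$ is a $[33-w,18]$ code with dual distance at least $10$ and minimum distance at least $\lceil d/2 \rceil$. Ranging over the small weights known to occur in $D$ (the minimum weight $d$ always being such, and weights $6$ and $7$ covering the remaining sub-cases when needed), one is led to examine $[27,18]^{10}$ and $[26,18]^{10}$ residual candidates. These are exhaustively enumerated up to equivalence by algorithm \textsc{Bruteforce}, built incrementally from the already-classified $[n,k]^{10}$ codes of smaller length.

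For each residual candidate $R$ the remaining question is whether it extends back to a $[33,19,d]^{10}$ code $D$. A generator matrix of any such $D$ has block form
\begin{equation*}
\left[\begin{array}{c|c} 1 \cdots 1 & 0 \cdots 0 \\ X & G_R \end{array}\right]
\end{equation*}
with $X$ an unknown $18 \times w$ fill-in. By Proposition~\ref{pcm} the required minimum-distance and dual-distance conditions translate into a $(d^\perp-1)$-proper-column condition on the parity-check matrix of the extended code, which is exactly the search problem that algorithm \textsc{Extend} is designed for: it explores fill-ins column-by-column using packed-word bit operations and prunes aggressively via the tuple $\mathbf{r}$. If \textsc{Extend} returns empty for every residual $R$, then $D$ cannot exist, hence neither can $C$; combined with the known existence of a $[34,14,10]$ code this yields $n_2(14,10)=34$.

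The main obstacle is computational. Tens of thousands of residual candidates must be processed, each launching a separate extension search over an exponentially large fill-in space. The whole argument is feasible only because of the pruning and compact data representation baked into \textsc{Extend}; without these, the case analysis would be out of reach.
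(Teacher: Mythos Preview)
Your strategy is the paper's: dualise to $D=[33,19,d]^{10}$, bound $d\in\{5,6\}$ via the nonexistence of $[33,18,8]$ and Theorem~\ref{b_th} with $L(18,10)=28$, classify the residual $[33-d,18]^{\ge 10}$ codes, and feed each to \textsc{Extend}. This is correct in outline and matches the paper.

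One arithmetic slip, though: the residual with respect to a minimum-weight codeword has length $33-d$, so $d=5$ yields $[28,18]^{10}$ and $d=6$ yields $[27,18]^{10}$. Your list ``$[27,18]^{10}$ and $[26,18]^{10}$'' omits the $d=5$ case entirely and instead includes a weight-$7$ residual, which you have no reason to know exists and which is in any case unnecessary once $d\le 6$ is established. The correct counts from Table~2 are $11$ codes at $[28,18]^{10}$ and $30481$ at $[27,18]^{10}$; these are exactly the inputs the paper runs through \textsc{Extend}. (A smaller terminological point: the $9$-proper column condition you invoke lives on the \emph{generator} matrix of $D$---equivalently the parity-check matrix of $C$---not on the parity-check matrix of $D$; your displayed block form already has this right, so the sentence following it just needs to be made consistent with it.)
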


The calculations  for nonexistence of codes with parameters
$[33,18,8]$ and $[33,14,10]$ took about 4 weeks in a contemporary
PC.

From Theorem 9 and Theorem 10  and tables for bounds of codes
\cite{new-Tables}, we have:

\begin{cor}
$n_2(19,8)=35$, $n_2(20,8)=36$, $n_2(21,8)=37$, $n_2(22,8)=38$,
$n_2(15,10)=35$ and  $n_2(16,10)=36$.
\end{cor}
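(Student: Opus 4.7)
The plan is to combine the two nonexistence theorems with the standard monotonicity property of $n_2(k,d)$ under shortening, and then invoke the upper bounds recorded in the tables \cite{new-Tables}.

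First I would recall the shortening inequality: if an $[n,k,d]$ code $C$ exists, then shortening $C$ in any coordinate (taking codewords with a zero in that position and deleting it) produces an $[n-1,k-1,d']$ code with $d' \ge d$. Hence the existence of $[n,k,d]$ forces the existence of $[n-1,k-1,d]$, which in turn yields the inequality $n_2(k,d) \ge n_2(k-1,d)+1$. Iterating this gives $n_2(k+j,d) \ge n_2(k,d) + j$ for every $j \ge 0$.

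Next, I would apply this iterated inequality separately to each chain. Starting from Theorem 9, which asserts $n_2(18,8)=34$, iterating shortening four times yields the lower bounds $n_2(19,8) \ge 35$, $n_2(20,8) \ge 36$, $n_2(21,8) \ge 37$, and $n_2(22,8) \ge 38$. Starting from Theorem 10, which asserts $n_2(14,10)=34$, two applications give $n_2(15,10) \ge 35$ and $n_2(16,10) \ge 36$.

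Finally I would close each inequality by quoting the matching upper bound. The tables in \cite{new-Tables} already record explicit $[35,19,8]$, $[36,20,8]$, $[37,21,8]$, $[38,22,8]$, $[35,15,10]$, and $[36,16,10]$ codes (or constructions that imply their existence), so the upper bounds $n_2(19,8) \le 35$, $n_2(20,8) \le 36$, $n_2(21,8) \le 37$, $n_2(22,8) \le 38$, $n_2(15,10) \le 35$, and $n_2(16,10) \le 36$ are available from the literature. Combining the two sides gives equality in each case. There is no genuine obstacle here, since all the heavy lifting has already been absorbed into Theorems 9 and 10; the only thing to be careful about is to cite the correct entries of the table so that each upper bound does indeed match the lower bound supplied by the shortening argument.
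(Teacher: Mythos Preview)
Your proposal is correct and matches the paper's approach: the paper simply states that the corollary follows from Theorems 9 and 10 together with the tables in \cite{new-Tables}, and your argument makes explicit the shortening inequality $n_2(k,d)\ge n_2(k-1,d)+1$ for the lower bounds and the tabulated constructions for the upper bounds. There is nothing to add; this is exactly what the authors intend.
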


\begin{table}
\begin{center}
{\bf TABLE 1 - Classification results for $[n,k]^{d^\perp \ge 8}$ codes} \\

\bigskip

\begin{tabular}{| c | c | c | c | c | c |}\hline
$\bf{n} \setminus \bf{k}$ & \bf{14} & \bf{13} & \bf{12} & \bf{11} & \bf{10} \\
\hline \hline
10 & 0 & 0 & 0 & 0 & 1* \\
\hline
11 & 0 & 0 & 0 & 1* & 4* \\
\hline
12 & 0 & 0 & 1* & 5* & 1 \\
\hline
13 & 0 & 1* & 6* & 3 & 0 \\
\hline
14 & 1* & 7* & 7* & 1 & 0 \\
\hline
15 & 8* & 14* & 4 & 1 & 0 \\
\hline
16 & 24* & 16 & 5 & 1 & 0 \\
\hline
17 & 50* & 23  & 5 & 0 & 0 \\
\hline
18 & 131 & 39 & 2 & 0 & 0 \\
\hline
19 & 450 & 30 & 1 & 0 & 0 \\
\hline
20 & 1863 & 27 & 1 & 0 & 0 \\
\hline
21 & 11497 & 13 & 1 & 0 & 0 \\
\hline
22 & 46701 & 10 & 1 & 0 & 0 \\
\hline
23 & 40289 &  9& 1 & 0 & 0 \\
\hline
24 & 5177 & 10 & 1 & 0 & 0 \\
\hline
25 & 536 & 8 & 0 & 0 & 0 \\
\hline
26 & 274 & 0 & 0 & 0 & 0 \\
\hline
27 & 1 & 0 & 0 & 0 & 0 \\
\hline
28 & 1 & 0 & 0 & 0 & 0 \\
\hline
29 & 0 & 0 & 0 & 0 & 0 \\

\hline
\end{tabular}

\end{center}
\end{table}

\begin{table}
\begin{center}

{\bf TABLE 2 - Classification results for $[n,k]^{d^\perp \ge 10}$ codes} \\

\bigskip

\begin{tabular}{| c | c | c | c | c | c |}\hline
$\bf{n} \setminus \bf{k}$  & \bf{18} & \bf{17} & \bf{16} & \bf{15} & \bf{14} \\
\hline \hline
15 & 0 & 0 & 0 & 0 & 1* \\
\hline
16 &  0 & 0 & 0 & 1* &6* \\
\hline
16 &  0 & 0 & 1* & 7* &3 \\
\hline
17 &  0 & 1* & 8* & 7& 0\\
\hline
18 &  1* & 9* & 14 & 1 &0\\
\hline
19 &  10* & 24* & 7 & 0 &0\\
\hline
20 &  38* & 29* & 3 &0 & 0\\
\hline
21 &  90* & 30& 2 &0 & 0 \\
\hline
22 &  237* &39 & 0 &0 &0\\
\hline
23 &  1031* & 29& 0 &0 &0\\
\hline
24 &  11114 &6 &0 &0 &0 \\
\hline
25 & 188572  & 0 & 0 &0 &0\\
\hline
26 &  563960  &0 & 0 &0 &0\\
\hline
27 &  30481 & 0 & 0 &0 &0\\
\hline
28 &  11 & 0   & 0 & 0  &0\\
\hline
29 &  0 &  0 &0  &  0  &0\\

\hline
\end{tabular}

\end{center}
\end{table}



\begin{table}

{\small
Algorithm \textsc{Extend} (Input: $V$, $t$):\\

\noindent
$|$~~~ \textbf{var} U: array of sets; \  max,i:integer;\\
$|$~~~~~~~~~~  found:boolean;  $\hat{X}$:set;
  r:array of integer;\\
\textbf{Procedure}  ext($\hat{X}$:set; size:integer);\\
$|$~ \textbf{var} i:integer;\\
$|$~ \textbf{ \{}\\
$|$~~~~~\textbf{if} size= $t$ \textbf{then}   \textbf{ \{} print($\hat{X}$); \textbf{exit}; \textbf{ \}};\\
$|$~~~~~\textbf{if}  $|U{[size-1]}|=0$ \textbf{then}\\
$|$~~~~~~ \textbf{ \{}\\
$|$~~~~~~~~~~~\textbf{if} ((size$>$max) and (size$<t$)) \textbf{then}\\
$|$~~~~~~~~~~~~\textbf{ \{}\\
$|$~~~~~~~~~~~~~~~~~~~~max:=size;\\
$|$~~~~~~~~~~~~~~~~~~~~\textbf{if} $max<t$ \textbf{then} found:=true;\\
$|$~~~~~~~~~~~~\textbf{ \}};\\
$|$~~~~~~~~~~~~~exit;\\
$|$~~~~~~~\textbf{ \}};\\
$|$~~~~~~~\textbf{while} $|U{[size-1]}| <> 0$ \textbf{do}\\
$|$~~~~~~~\textbf{ \{}\\
$|$~~~~~~~~~~\textbf{if} $(((size + |U{[size-1]}|<=max)$ and $(max<t))$ or\\
$|$~~~~~~~~~~~~$((size + |U{[size-1]}|)<t)$ and $(max=t)))$ \textbf{then} exit;\\
$|$~~~~~~~~~~$i:=\min\{j: v_j \in U{[size-1]}\};$\\
$|$~~~~~~~~~~$\textbf{if} \ (((size + r[i])<=max)$ and $(max<t))$ or\\
$|$*~~~~~~~~~~~~~~~$((size + r[i])<t)$ and $(max=t)))$  \textbf{then}   exit;\\
$|$~~~~~~~~~~$U{[size-1]}= U{[size-1]} \backslash \{v_i \};$\\
$|$**~~~~~~~~$U{[size]}:=\{v_j: v_j \in U{[size-1]}$ and $\{\hat{X}\cup v_i \cup v_j\} \in P \}$;\\
$|$~~~~~~~~~~\textbf{if} $size< t$ \textbf{then} ext($\{\hat{X}\cup v_i \}, size + 1$);\\
$|$~~~~~~~~~~\textbf{if} $found=true$ \textbf{then} exit;\\
$|$~~~~~~~\textbf{ \}};\\
$|$~\textbf{ \}};\\

\noindent
\textbf{Procedure} Main;\\
$|$ \textbf{\{}\\
$|$~~~max:=0;\\
$|$~~~\textbf{for} $i:=|V|$ \textbf{downto} 1 \textbf{do}\\
$|$~~~\textbf{\{}\\
$|$~~~~~~~~found:=false;\\
$|$~~~~~~~~$\hat{X}:=\{v_i \}$;\\
$|$~~~~~~~~$U{[0]}:=\{v_j: v_j \in V \,  \ j>i$ and  $\{\hat{X} \cup v_j \}\in P$ \};    \\
$|$~~~~~~~~ext($\hat{X}$, 1);\\
$|$***~~~~~r[i]:=max;\\
$|$~~~\textbf{\}};\\
$|$~\textbf{\}};\\
}

\end{table}

 \addcontentsline {toc}{section}{References}

\end{document}